\mathchardef\mhyphen="2D
\title{Synchronizing weighted automata}
\author{Szabolcs Iv\'an
\institute{University of Szeged, Hungary}
\email{szabivan@inf.u-szeged.hu}
}
\def\bK{{\mathbf{K}}}
\def\bN{{\mathbf{N}}}
\def\bZ{{\mathbf{Z}}}
\def\bB{{\mathbf{B}}}
\theoremstyle{plain} 
\newtheorem{theorem}{Theorem} 
\newtheorem{proposition}{Proposition}
\theoremstyle{definition} 
\newtheorem{definition}{Definition}{}
\theoremstyle{remark} 
\newtheorem{remark}{Remark} 
\begin{document}

\maketitle

\begin{abstract}
We introduce two generalizations of synchronizability to automata with transitions weighted in an arbitrary semiring $\bK=(K,+,\cdot,0,1)$.
(or equivalently, to finite sets of matrices in $\bK^{n\times n}$.)
Let us call a matrix $A$ location-synchronizing if there exists a column in $A$ consisting of nonzero entries such that all the other columns of $A$ are filled by zeros.
If additionally all the entries of this designated column are the same, we call $A$ synchronizing.
Note that these notions coincide for stochastic matrices and also in the Boolean semiring.
A set $\mathcal{M}$ of matrices in $K^{n\times n}$ is called (location-)synchronizing if $\mathcal{M}$ generates a matrix subsemigroup containing a (location-)synchronizing matrix.
The $\bK$-(location-)synchronizability problem is the following: given a finite set $\mathcal{M}$ of $n\times n$ matrices with entries in $\bK$,
is it (location-)synchronizing? 
Both problems are PSPACE-hard for any nontrivial semiring.
We give sufficient conditions for the semiring $\bK$ when the problems are PSPACE-complete and show several undecidability results as well, e.g. synchronizability is undecidable if $1$ has infinite order in $(K,+,0)$ or when the free semigroup on two generators can be embedded into $(K,\cdot,1)$.
\end{abstract}

\section{Introduction}

The synchronization (directing, reseting) problem of classical, deterministic automata is a well-studied topic with a vast literature (see e.g.~\cite{volkov} for a survey). An automaton $\mathcal{A}$ is \emph{synchronizable} if some word $u$ induces a constant function on its state set, in which case $u$ is a synchronizing word of $\mathcal{A}$.
Deciding whether an automaton is synchronizable can be done in polynomial time and it is also known that for synchronizable automata, a synchronizing word of length $\mathcal{O}(n^3)$ exists, where $n$ denotes the number of its states.
(The famous \v{C}ern\'y conjecture from the sixties states that this bound is $(n-1)^2$.)

The notion of synchronizability has been extended e.g. (in three different ways) to nondeterministic automata in~\cite{imreh}, to stochastic automata in~\cite{kfouri} and more recently in another way in~\cite{doyen}, to integer-weighted transitions in~\cite{larsen}.
To our knowledge, only ad-hoc notions have been defined so far, each for a particular underlying semiring.
We note that in~\cite{larsen} the notion has also been extended to timed automata as well.

In this paper we introduce several extensions of synchronizability to automata with transitions weighted in an arbitrary semiring $\bK=(K,+,\cdot,0,1)$. For states $p,q$ and word $u$, let $(pu)_q\in K$ denote the sum of the weights of all $u$-labeled paths from $p$ to $q$, with the weight of a path being the product of the weights of its edges, as usual.
Following the nomenclature of~\cite{larsen},
we call the automaton $\mathcal{A}$ \emph{location-synchronizable} if $\exists q,u$: $\forall p,r$\quad$(pu)_r\neq 0$ iff $r=q$ and \emph{synchronizable} if $\exists q,u,k\neq 0$: $\forall p,r$\quad $(pu)_q=k$ and $(pu)_r=0$ for each $r\neq q$.

As an equivalent formulation, let us call a matrix $A\in\bK^{n\times n}$ \emph{location synchronizing} if it contains a column entirely filled with nonzero values, and all its other entries are zero.
If in addition all the nonzero values are the same, we call $A$ \emph{synchronizing}.
Then, an instance of the synchronizability problems is a finite set $\mathcal{A}=\{A_i:1\leq i\leq k\}$ of matrices, each in $\bK^{n\times n}$. 
The family $\mathcal{A}$ is called (location) synchronizable if it generates a (location) synchronizing matrix.
The question is to decide whether the instance is (location) synchronizing.

Note that these notions coincide for stochastic automata and also in the Boolean semiring.
For unconstrained automata, both problems are $\mathbf{PSPACE}$-hard for any nontrivial semiring, and in any semiring,
the length of the shortest directing word can be exponential.
We give sufficient conditions for the semiring $\bK$ when the problems are in $\mathbf{PSPACE}$ (and hence are $\mathbf{PSPACE}$-complete)
and show several undecidability results as well.

\section{Notation}

A \emph{semiring} is an algebraic structure $\bK=(K,+,\cdot,0,1)$ where $(K,+,0)$ is a commutative monoid with identity $0$,
$(K,\cdot,1)$ is a monoid with identity $1$, $0$ is an annihilator for $\cdot$ and $\cdot$ distributes over $+$, i.e.
$0a=a0=0$, $(a+b)c=ac+bc$ and $a(b+c)=ab+ac$ for each $a,b,c\in K$. (When the context is clear, we usually omit the $\cdot$ sign.)
The case when $|K|=1$ is that of the trivial semiring; when $|K|>1$, the semiring is nontrivial.
Three semirings used in this paper are
the \emph{Boolean semiring} $\bB=(\{0,1\},\vee,\wedge,0,1)$
and the semirings $\bN$ and $\bZ$ of the natural numbers $\{0,1,2,\ldots\}$ and the integers $\{0,\pm 1,\pm 2,\ldots\}$
with the standard addition and product. Among these, only $\bZ$ is a ring since the other two have no additive inverses.
A semiring $\bK$ is zero-sum-free if $a+b=0$ implies $a=b=0$; is zero-divisor-free if $ab=0$ implies $a=0$ or $b=0$;
is positive if it is both zero-sum-free and zero-divisor-free;
is locally finite if for any finite $K_0\subseteq K$, the least subsemiring of $K$ containing $K_0$ (which is also called
the subsemiring of $K$ generated by $K_0$) is finite.

An \emph{alphabet} is a finite nonempty set, usually denoted $A$ in this paper.
When $n$ is an integer, $[n]$ stands for the set $\{1,\ldots,n\}$.
For a set $X$, $P(X)$ denotes its power set $\{Y:Y\subseteq X\}$.
For any alphabet $A$, the semiring of \emph{languages} over $A$ is $(P(A^*),\cup,\cdot,\emptyset,\{\varepsilon\})$ where
product is concatenation of languages, $KL=\{uv:u\in K,v\in L\}$ and $\varepsilon$ stands for the empty word.

When $\bK$ is a semiring and $n>0$ is an integer, then the set $\bK^{n\times n}$ of $n\times n$ matrices with entries in $\bK$
also forms a semiring with pointwise addition $(A+B)_{i,j}=A_{i,j}+B_{i,j}$ (for clarity, $A_{i,j}$ stands for the entry in the
$i$th row and $j$th column)
and the usual matrix product $(AB)_{i,j}=\sum_{k\in[n]}A_{i,k}B_{k,j}$.
The zero element is the null matrix $\mathcal{O}_{i,j}=0$
and the one element is the identity matrix $I_{i,j}=\begin{cases}1\textrm{, if }i=j\\0\textrm{, otherwise}\end{cases}$ in $\bK^{n\times n}$.

In this article we only take products of matrices, no sums and thus use the notion $\langle \mathcal{M}\rangle$ when
$\mathcal{M}\subseteq\bK^{n\times n}$ is a set of matrices for the least sub\emph{monoid} of the monoid $(\bK^{n\times n},\cdot,I_{n})$
containing $\mathcal{M}$. That is, $\langle\mathcal{M}\rangle$ contains all products of the form $M_1M_2\ldots M_k$ with
$k\geq 0$ and $M_i\in\mathcal{M}$ for each $i\in[k]$.

For a semiring $\bK$, alphabet $A$ and integer $n>0$, an \emph{$n$-state $\bK$-weighted $A$-automaton}
is a system $M=(\alpha,(M_a)_{a\in\Sigma},\beta)$
where $\alpha,\beta\in\bK^n$ are the \emph{initial} and \emph{final} vectors, respectively and for each $a\in A$,
$M_a\in \bK^{n\times n}$ is a \emph{transition matrix}. The mapping $a\mapsto M_a$ extends in a unique way to a homomorphism
$A^*\to \bK^{n\times n}$, $w\mapsto M_w$ with $M_{a_1\ldots a_k}=M_{a_1}\ldots M_{a_k}$. The automaton $M$ above associates to
each word $w$ a weight $M(w)=\alpha M_w\beta\in K$, where $\alpha$ is considered as a $1\times n$ row vector and $\beta$ as an
$n\times 1$ column vector. We usually do not specify the number $n$ of states explicitly and omit $\bK$ and $A$ when
the weight structure and/or the alphabet is clear from the context.

\section{Synchronizability in various semirings}

Classical nondeterministic automata (with multiple initial states but no $\varepsilon$-transitions)
can be seen as automata with weights in the Boolean semiring.
For any semiring $\bK$, a $\bK$-automaton $M=(\alpha,(M_a)_{a\in A},\beta)$ is
\begin{itemize}
\item \emph{partial} if there is at most one nonzero entry in each row of each transition matrix, and $\alpha$ has exactly one nonzero entry,
\item \emph{deterministic} if it is partial and there is exactly one nonzero entry in each row of each matrix $M_a$.
\end{itemize}
A classical deterministic automaton $M=(\alpha,(M_a)_{a\in A},\beta)$ is called
\emph{synchronizable} (directable, resetable etc) if there exists a word $w$ (called a synchronizing word of $M$) such that $M_w$
has exactly one column that is filled with $1$'s and all the other entries of $M_w$ are zero.
(Traditionally, this property is formalized as $w$ inducing a constant map on the state set.)

As an example, the $4$-state automaton $M=(\alpha,(M_a)_{a\in\{0,1\}},\beta)$ with arbitrary $\alpha$ and $\beta$ and with transition matrices
\[
M_0=\left(\begin{array}{llll}0&1&0&0\\0&0&1&0\\0&0&0&1\\1&0&0&0\end{array}\right),
\hfill M_1=\left(\begin{array}{llll}0&1&0&0\\0&1&0&0\\0&0&1&0\\0&0&0&1\end{array}\right)
\]
is synchronizable since for the word $100010001$, the transition matrix is
\[(M_1(M_0)^3)^2M_1=\left(\begin{array}{llll}0&1&0&0\\0&1&0&0\\0&1&0&0\\0&1&0&0\end{array}\right).\]

The notion celebrates its 50th anniversary this year -- a very popular and intensively studied conjecture in the area is that of
\v{C}ern\'y stating if an $n$-state classical deterministic automaton is synchronizable, then it admits a synchronizing word of length at most $(n-1)^2$.
We remark here that it is decidable in polynomial time (it's actually in $\mathbf{NL}$)
whether an input classical, deterministic automaton is synchronizable.

Synchronizability has been extended to nondeterminisic automata in~\cite{imreh} in three different ways.
Here we highlight the one entitled ``D3-directability'' there: a $\bB$-automaton $M=(\alpha,(M_a)_{a\in A},\beta)$
(that is, a classical nondeterministic automaton)
is called D3-directable if there exists a word $w$ such that $M_w$ has exactly one column that is filled with $1$'s and
all the other entries of $M_w$ are zero.
It is known (see e.g.~\cite{imreh}) that in general, the shortest synchronizing word of a synchronizable $n$-state $\bB$-automaton
can have length $\Omega(2^n)$ with $O(2^n)$ being an upper bound~\cite{gazdag}.
For partial $\bB$-automata, the best known bounds are $\Omega(\sqrt[3]{3}^n)$ and $O(n^2\sqrt[3]{4}^n)$, see~\cite{martyugin-bound,gazdag}.

In the next section of the paper we will frequently use the following results of~\cite{martyugin-pspace}:
\begin{theorem}
\label{thm-marty}
Deciding whether an input $\bB$-automaton is synchronizable is complete for $\mathbf{PSPACE}$.
The problem remains $\mathbf{PSPACE}$-complete when restricted to partial $\bB$-automata.
\end{theorem}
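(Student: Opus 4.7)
The plan is to handle the two parts of the claim separately. For the $\mathbf{PSPACE}$ upper bound, I would exploit that a Boolean matrix $M_w \in \{0,1\}^{n\times n}$ fits in $n^2$ bits and hence in polynomial space. By Savitch's theorem it suffices to describe a non-deterministic polynomial-space decision procedure: maintain a current matrix $N$, initialized to $I_n$; at each step either check whether $N$ has exactly one all-ones column with zeros elsewhere (accepting if so), or non-deterministically guess a letter $a$ and overwrite $N$ by $N M_a$, decrementing a $2^{n^2}$-bounded counter that fits in polynomial space. Correctness follows because $|\langle\{M_a\}\rangle| \leq 2^{n^2}$ in the Boolean setting, so any reachable product is reachable within that many multiplications. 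The same argument applies verbatim to the partial case.

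For $\mathbf{PSPACE}$-hardness, I would reduce from a standard $\mathbf{PSPACE}$-complete problem. In the unrestricted (genuinely nondeterministic) setting, non-emptiness of intersection of finitely many DFAs is natural: given $A_1, \ldots, A_k$ over alphabet $A$, one constructs a $\bB$-automaton $M$ on the disjoint union of the state sets plus a distinguished sink $q^\star$, simulating each $A_i$ in its own component and introducing a fresh letter together with a carefully chosen amount of nondeterminism so that $q^\star$ becomes the only possible column of synchronization, reached exactly when all components simultaneously occupy accepting states. The existence of a synchronizing word then corresponds precisely to the existence of a commonly accepted word, and the reduction is computable in logarithmic space.

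The main obstacle is the partial restriction, which forbids the true nondeterminism used in the above gadget (rows of transition matrices must have at most one nonzero entry). Here the cleanest route, following Martyugin, is to reduce directly from the acceptance problem of a polynomial-space-bounded Turing machine. The idea is to encode configurations (or fragments thereof) into states and let the partial transitions simulate an alternation between ``which cell is being read'' and ``which transition fires'', in such a way that a word synchronizes the constructed automaton iff it describes a valid accepting computation --- the careful-synchronization requirement (no partial branch is killed) forcing consistency across all simulated threads. Getting this encoding right so that the constructed automaton remains genuinely partial, while still making the designated sink the \emph{unique} candidate for synchronization, is the delicate technical step; the upper bound together with this hardness then yields the full completeness result for both variants.
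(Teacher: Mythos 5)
First, note that the paper does not prove this statement at all: Theorem~\ref{thm-marty} is imported verbatim from Martyugin~\cite{martyugin-pspace}, so there is no in-paper argument to compare yours against; your text is an attempted reconstruction of the cited result. Within that reconstruction, the $\mathbf{PSPACE}$ upper bound is fine (guess letters nondeterministically, keep the current Boolean matrix in $n^2$ bits, bound the walk by $2^{n^2}$, apply Savitch), and the reduction from DFA intersection nonemptiness for the unrestricted case is workable, although you leave the gadget vague: the clean version adds a reset letter $\alpha$ sending every state of component $i$ to its initial state, a check letter $\beta$ sending accepting states and $q^\star$ to $q^\star$ and being \emph{undefined} (all-zero row) on non-accepting states, and the converse direction is extracted by looking at the first occurrence of $\beta$ and the $\Sigma$-suffix after the last $\alpha$ preceding it, which must be accepted by every $A_i$ simultaneously.

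The genuine gap is the partial case, which is exactly the half you do not prove: you only gesture at ``encode a polynomial-space Turing machine, following Martyugin,'' and explicitly flag the encoding as the delicate step without giving it, so as a standalone proof the statement about partial $\bB$-automata is unsupported. Moreover, the detour through a Turing-machine simulation is unnecessary, and recognizing this would have closed the gap: in this paper ``partial'' means \emph{at most} one nonzero entry per row, so all-zero rows are allowed, and the DFA-intersection gadget described above is already partial (letters of $\Sigma$ and $\alpha$ are deterministic, $\beta$ has empty rows on non-accepting states). Hence one single reduction from DFA intersection nonemptiness (a standard $\mathbf{PSPACE}$-complete problem) establishes hardness of careful-synchronization-style directability for partial $\bB$-automata, and hardness for the unrestricted class follows immediately because partial automata form a subclass. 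In short: upper bound correct, unrestricted hardness essentially correct but under-specified, partial hardness missing as written, though repairable by a small twist on your own first construction rather than by the TM simulation you appeal to.
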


For the probabilistic semiring, in which case the weight structure is that of the nonnegative reals with the standard addition and product,
and the input automata's transition matrices are restricted to be stochastic, the notion has been also generalized by several authors:
\begin{itemize}
\item In~\cite{kfouri}, $M$ is synchronizable if there exists a word $w$ such that all the rows of $M_w$ are identical.
\item In~\cite{doyen}, $M$ is synchronizable if there exists a single infinite word $w$ such that for any $\epsilon>0$,
  there exists an integer $K_\epsilon$ such that for each finite prefix $u$ of $w$ having length at least $K_\epsilon$,
  in $M_u$ there is a column in which each entry is at least $1-\epsilon$.
\end{itemize}
The problem of checking synchronizability is undecidable in the former setting and ${\mathbf{PSPACE}}$-complete in the latter setting.

Most of these generalizations require (an arbitrary precise approximation of) a column consisting of ones and
zeros everywhere else in some matrix of the form $M_w$. In fact, under these conditions it is a simple consequence of the structure of
the semiring and the constraint on the automata that if in a row of a transition matrix $M_w$ there is exactly one nonzero element,
then it has to be $1$. (The Boolean semiring has only two elements, while in the probability semiring the stochasticity of the matrices
guarantee that the row sum is preserved and is one.)

The authors of~\cite{larsen} worked in the semiring $\bZ$, with a different semantics notion, though:
according to the notions of the present paper they worked in the semiring $P_f(\bZ)$, where the elements are finite sets of integers,
with union as addition and complex sum $X+Y=\{x+y:x\in X,y\in Y\}$ being product.
There two different notions of synchronizability are introduced: a matrix $M$ is \emph{location synchronizing} if there exists
a column in which each entry is nonzero, while all the other entries of the matrix are zeroes (recall that in this semiring
$\emptyset$ plays as zero)
and is \emph{synchronizing} if additionally the nonzero entries all coincide and map every possible starting vector $\alpha$
to some fixed vector (which is simply not possible in this semiring since this would require the presence of an $\mathcal{L}$-trivial element of the semiring).
An automaton $M$ is location synchronizable if there exists a word $w$ such that
$M_w$ is location synchronizing.
Regarding the complexity issues, location synchronizability is $\mathbf{PSPACE}$-complete (which is due to the fact
that $P_f(\bZ)$ is positive, cf. Proposition~\ref{prop-zsf-zdf})
and synchronizability is trivially false.

In this paper we extend the notion of synchronizability in spirit similar to~\cite{larsen}, covering most of the generalizations above
(the exception being the case of the probabilistic semiring, which seems to require a notion of metric).
\begin{definition}
Given a semiring $\bK$ and a matrix $M\in\bK^{n\times n}$, we say that $M$ is
\begin{itemize}
\item \emph{location synchronizing} if there exists a (unique) integer $i\in[n]$ such that $M_{j,k}\neq 0$ iff $k=i$;
\item \emph{synchronizing} if it is location synchronizing and additionally, $M_{j,i}=M_{1,i}$ for each $j\in[n]$ for the above index $i$.
\end{itemize}
A finite set $M_1,\ldots,M_k\in\bK^{n\times n}$ of matrices is \emph{(location) synchronizable} if they generate a (location) synchronizing matrix,
i.e. when $M_{i_1}M_{i_2}\ldots M_{i_t}$ is (location) synchronizing for some $i_1,\ldots,i_t\in[k]$, $t>0$.

A $\bK$-automaton is (location) synchronizable if so is its set of transition matrices.

We formulate the $\bK$-(location) synchronizing problem
($\bK\mhyphen\mathbf{Sync}$ and $\bK\mhyphen\mathbf{LocSync}$ for short)
as follows: given a finite set $\mathcal{M}=\{M_1,\ldots,M_k\}$ of matrices
in $\bK^{n\times n}$ for some $n>0$, decide whether $\mathcal{M}$ is (location) synchronizable?

(Clearly, this is equivalent to having a single $\bK$-automaton as input.)
\end{definition}

\section{Results on complexity of the two problems}

Given a semiring $\bK$, call a matrix $M\in\bK^{n\times n}$ a partial $0/1$-matrix if in each row there is at most one nonzero entry,
which can have only a value of $1$ if present, formally for each $i$ there exists at most one $j$ with $M_{i,j}\neq 0$ in which
case $M_{i,j}=1$ has to hold. Observe that the product of two partial $0/1$-matrices is still a partial $0/1$-matrix, being the
same in any semiring. Moreover, a partial $0/1$-matrix is synchronizing iff it is location synchronizing.
Thus the following are equivalent for any set $\mathcal{M}\subseteq\bK^{n\times n}$ of partial $0/1$-matrices:
\begin{enumerate}
\item $\mathcal{M}$ is synchronizable;
\item $\mathcal{M}$ is location synchronizable;
\item $\mathcal{M}$, viewed as a set of partial $0/1$-matrices over $\bB$, is synchronizable.
\end{enumerate}
Since by Theorem~\ref{thm-marty} the last condition is $\mathbf{PSPACE}$-hard to check, we immediately get the following:
\begin{proposition}
For any nontrivial semiring $\bK$, both $\bK\mhyphen\mathbf{Sync}$ and $\bK\mhyphen\mathbf{LocSync}$ are $\mathbf{PSPACE}$-hard.
\end{proposition}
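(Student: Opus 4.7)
The plan is to reduce $\bB$-synchronizability for partial $\bB$-automata (which is $\mathbf{PSPACE}$-complete by Theorem~\ref{thm-marty}) to both $\bK\mhyphen\mathbf{Sync}$ and $\bK\mhyphen\mathbf{LocSync}$ simultaneously, using partial $0/1$-matrices as the bridge. The author's paragraph just before the statement already assembles all the required ingredients, so the argument should amount to packaging them correctly.

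First I would observe that any nontrivial semiring $\bK$ satisfies $0\neq 1$: otherwise $a = a\cdot 1 = a\cdot 0 = 0$ for every $a$, forcing $|K|=1$. Hence the symbols $0,1$ of $\bB$ can be interpreted unambiguously as distinct elements of $\bK$. Given an instance $\mathcal{M}=\{M_1,\ldots,M_k\}\subseteq\bB^{n\times n}$ of the partial $\bB$-automaton synchronizability problem, the reduction simply re-reads the same matrices as elements of $\bK^{n\times n}$; this is clearly a log-space many-one reduction.

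Next I would verify correctness. The key observation, already noted in the excerpt, is that products of partial $0/1$-matrices are again partial $0/1$-matrices, and moreover the product agrees entry-for-entry whether the arithmetic is performed in $\bB$ or in $\bK$. This is because each row of a partial $0/1$-matrix $A$ contains at most one nonzero entry, so the convolution $(AB)_{i,j}=\sum_k A_{i,k}B_{k,j}$ collapses to a single summand and never invokes $\bK$-addition of two nonzero operands; whichever semiring the entries are interpreted in, the result is identical. An induction on the length of the product then shows that every $M_{i_1}\cdots M_{i_t}$, computed in $\bK$, coincides as a $0/1$-matrix with its Boolean counterpart.

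Finally, invoking the three-way equivalence stated just before the proposition (synchronizable in $\bK$ iff location-synchronizable in $\bK$ iff synchronizable in $\bB$, for sets of partial $0/1$-matrices), I would conclude that the instance $\mathcal{M}$ has the same answer under all three problems, yielding simultaneous $\mathbf{PSPACE}$-hardness of $\bK\mhyphen\mathbf{Sync}$ and $\bK\mhyphen\mathbf{LocSync}$. There is no real obstacle here: the only subtleties are the separation $0\neq 1$ from nontriviality and the remark that partial $0/1$ arithmetic is semiring-independent, both of which are routine.
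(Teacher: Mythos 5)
Your proposal is correct and follows essentially the same route as the paper: the paper's proof is exactly the observation that partial $0/1$-matrices multiply identically in any semiring, that for such matrices synchronizability, location synchronizability, and Boolean synchronizability coincide, and that hardness then follows from Theorem~\ref{thm-marty}. Your additional remarks (that nontriviality gives $0\neq 1$, and the single-summand argument for why products are semiring-independent) are just explicit versions of details the paper leaves implicit.
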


\subsection{Decidable subcases}

First we make several (rather straightforward) observations on decidable subcases, generally involving finiteness conditions.

Of course if $\bK$ is finite, we get $\mathbf{PSPACE}$-completeness: 
\begin{proposition}
For any finite semiring $\bK$ both problems are in $\mathbf{PSPACE}$, thus are $\mathbf{PSPACE}$-complete.
\end{proposition}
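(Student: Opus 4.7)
The plan is to give a nondeterministic polynomial-space algorithm and invoke Savitch's theorem, with PSPACE-hardness already being in hand from the preceding proposition.

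First I would observe that, since $\bK$ is finite and fixed as part of the problem (not the input), every matrix in $\bK^{n\times n}$ can be stored in $O(n^2 \log|K|) = O(n^2)$ bits, which is polynomial in the size of the input $\mathcal{M}$. Moreover, the monoid $\langle\mathcal{M}\rangle$ is a submonoid of the finite monoid $\bK^{n\times n}$, which has at most $|K|^{n^2}$ elements.

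Next I would describe the nondeterministic algorithm. Maintain a ``current product'' matrix $C$, initialized to $I_n$, together with a counter bounded by $|K|^{n^2}$ (which also fits in $O(n^2)$ bits). At each step, nondeterministically guess an index $i\in[k]$, replace $C$ by $C\cdot M_i$, increment the counter, and then test whether $C$ is (location) synchronizing — a test which is clearly doable in polynomial time, hence polynomial space, by inspecting the columns of $C$. Accept as soon as the test succeeds; reject if the counter is exceeded. The counter bound is safe: if $\mathcal{M}$ is (location) synchronizable then some shortest witnessing product $M_{i_1}\cdots M_{i_t}$ must visit pairwise distinct partial products in $\langle\mathcal{M}\rangle$ (otherwise one could shorten), so $t\le |K|^{n^2}$.

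The total workspace is polynomial in $n$ (and constant in $|K|$), so the procedure runs in NPSPACE, and Savitch's theorem gives $\mathbf{NPSPACE}=\mathbf{PSPACE}$. Combined with the PSPACE-hardness established in the preceding proposition (which already applies to any nontrivial semiring, in particular to any nontrivial finite one; when $\bK$ is trivial the problem is vacuous and trivially in $\mathbf{PSPACE}$), this yields PSPACE-completeness of both $\bK\mhyphen\mathbf{Sync}$ and $\bK\mhyphen\mathbf{LocSync}$.

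I do not expect any real obstacle here: the only point requiring care is to make explicit that $|K|$ is a constant of the problem rather than part of the input, so that a single matrix of $\bK^{n\times n}$ has polynomial-size encoding. Everything else is a standard reachability-in-an-exponential-graph argument.
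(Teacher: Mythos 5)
Your proposal is correct and follows essentially the same route as the paper: nondeterministically build up a product matrix $C$ in $O(n^2)$ space (possible because $|K|$ is a fixed constant), test after each multiplication whether $C$ is (location) synchronizing, and conclude via $\mathbf{NPSPACE}=\mathbf{PSPACE}$ by Savitch's theorem, with hardness from the preceding proposition. Your explicit counter bounded by $|K|^{n^2}$ is a small tidying-up of the paper's ``endless loop'' formulation, ensuring the machine halts, but it does not change the argument.
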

\begin{proof}
Given an instance $\mathcal{M}=\{M_1,\ldots,M_k\}$ of the problem,
we store a current matrix $C\in\bK^{n\times n}$ initialized by the unit matrix $I_n$ of $\bK^{n\times n}$.
In an endless loop, we nondeterministically choose an index $i\in [k]$ and let $C:=CA_i$.
After each step we check whether $C$ is (location) synchronizing. If so, we report acceptance, otherwise
continue the iteration.

If $\bK$ is finite, storing an entry of $C$ takes constant space, so storing $C$ takes $O(n^2)$ memory,
as well as computation of the product matrix. In total, we have an $\mathbf{NPSPACE}$ algorithm which
is $\mathbf{PSPACE}$ by Savitch's theorem~\cite{papadimitriou}.
\end{proof}

\begin{proposition}
For any locally finite semiring $\bK$, both $\bK\mhyphen\mathbf{Sync}$ and
$\bK\mhyphen\mathbf{LocSync}$ are decidable, provided that addition and product of $\bK$ are computable.
\end{proposition}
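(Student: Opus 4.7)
The plan is to reduce the problem to the finite-semiring case already settled in the previous proposition. Given an input $\mathcal{M}=\{M_1,\ldots,M_k\}\subseteq\bK^{n\times n}$, let $K_0\subseteq K$ be the finite set consisting of $0$, $1$, and all entries appearing in the matrices $M_i$. By local finiteness, the subsemiring $K_1=\langle K_0\rangle$ generated by $K_0$ is finite. A routine induction on $t$ then shows that every entry of every product $M_{i_1}\cdots M_{i_t}$ lies in $K_1$, since matrix entries of a product are sums of products of entries of the factors and $K_1$ is closed under both semiring operations.

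Consequently the whole submonoid $\langle\mathcal{M}\rangle$ sits inside the finite set $K_1^{n\times n}$. Hence one can enumerate all reachable matrices, for instance by a breadth-first search starting from $I_n$ and expanding each discovered matrix by right-multiplication with every $M_i$; this terminates because $\langle\mathcal{M}\rangle$ itself is finite. The input is (location) synchronizable iff at least one enumerated matrix is (location) synchronizing, a property directly checkable on each matrix.

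The remaining question is how to execute the search effectively. First I would compute $K_1$ from $K_0$ explicitly: initialize a list $L:=K_0$, repeatedly append $a+b$ and $a\cdot b$ for $a,b\in L$ as long as new elements are produced, and halt when a full pass adds nothing; termination is again guaranteed by finiteness of $K_1$. Correctness requires both the operations $+,\cdot$ and equality on $K$ to be decidable; the former is the standing hypothesis, and the latter is in any case implicit in working with a ``computable'' semiring --- equality with $0$ is already needed to even recognize a (location) synchronizing matrix. The only conceptual obstacle is this equality issue; once it is granted (as it must be for the problem statement to be meaningful) the rest of the argument is entirely mechanical, and it is worth remarking that no complexity bound is claimed, since $|K_1|$ and therefore $|\langle\mathcal{M}\rangle|$ can be arbitrarily large as a function of the input even though both are finite.
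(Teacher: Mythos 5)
Your argument is correct and essentially identical to the paper's: both confine $\langle\mathcal{M}\rangle$ to $K_1^{n\times n}$ for the finite subsemiring $K_1$ generated by the matrix entries, enumerate the generated monoid until it saturates, and accept iff a (location) synchronizing matrix is found. Your remark that decidable equality (in particular testing entries against $0$) is also needed is a fair observation that the paper leaves implicit in its computability hypothesis.
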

\begin{proof}
Recall that a semiring $\bK$ is locally finite if any finite subset of $K$ generates a finite subsemiring of $\bK$.

Now given an instance $\mathcal{M}=\{M_1,\ldots,M_k\}$ of the problem,
let $X=\{{M_i}_{j,t}:i\in[k],j,t\in[n]\}\subseteq K$ stand for the finite set of the entries occurring in any of the matrices.
Then clearly, $\langle \mathcal{M}\rangle\subseteq \mathbf{X}^{n\times n}$ where $\mathbf{X}$ is the subsemiring of $\bK$
generated by $X$.
Since $\bK$ is finitely generated, this
implies $\langle\mathcal{M}\rangle$ is finite as well, hence there exists an integer $t$ such that $\langle\mathcal{M}\rangle
=\mathcal{M}^{\leq t}=\{M_{i_1}M_{i_2}\ldots M_{i_d}:d\leq t,i_1,\ldots,i_d\in[k]\}$ which can be chosen to be the least integer 
$t$ with $\mathcal{M}^{\leq t}=\mathcal{M}^{\leq t+1}$. Hence by computing the sets $\mathcal{M}^{\leq t}$ for $t=0,1,2,\ldots$
and reporting acceptance when a witness is found and rejecting the input when $\mathcal{M}^{\leq t}=\mathcal{M}^{\leq t+1}$
gets satisfied without finding a witness we decide the respective problem.

(Note that computability of addition and product is needed for the effective computation of the sets above.)
\end{proof}

\begin{proposition}
\label{prop-zsf-zdf}
For any positive semiring $\bK$, $\bK\mhyphen\mathbf{LocSync}$ is in $\mathbf{PSPACE}$.
\end{proposition}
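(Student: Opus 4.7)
The plan is to reduce $\bK\mhyphen\mathbf{LocSync}$ to $\bB\mhyphen\mathbf{Sync}$ via the \emph{support} map and then invoke Theorem~\ref{thm-marty}. Define $\sigma\colon\bK\to\bB$ by $\sigma(0)=0$ and $\sigma(x)=1$ for $x\neq 0$, and extend $\sigma$ entrywise to matrices, so $\sigma(M)\in\bB^{n\times n}$ encodes the zero/nonzero pattern of $M\in\bK^{n\times n}$. Since location-synchronizability of a matrix depends only on its zero/nonzero pattern, $M$ is location synchronizing in $\bK$ iff $\sigma(M)$ is location synchronizing in $\bB$; and in $\bB$ location synchronizing coincides with synchronizing (the only nonzero value is $1$).

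The heart of the argument is that $\sigma$ is a monoid homomorphism $(\bK^{n\times n},\cdot,I_n)\to(\bB^{n\times n},\cdot,I_n)$ when $\bK$ is positive. Indeed, for $A,B\in\bK^{n\times n}$,
\[
(AB)_{i,j} \;=\; \sum_{k\in[n]} A_{i,k}B_{k,j}.
\]
Zero-divisor-freeness gives $A_{i,k}B_{k,j}\neq 0$ iff both $A_{i,k}\neq 0$ and $B_{k,j}\neq 0$, while zero-sum-freeness gives $\sum_k A_{i,k}B_{k,j}\neq 0$ iff at least one summand is nonzero. Combining, $(AB)_{i,j}\neq 0$ iff there is some $k$ with $A_{i,k}\neq 0$ and $B_{k,j}\neq 0$, which is exactly the condition $(\sigma(A)\sigma(B))_{i,j}=1$. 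Thus $\sigma(AB)=\sigma(A)\sigma(B)$, and trivially $\sigma(I_n)=I_n$.

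Consequently, for an instance $\mathcal{M}=\{M_1,\ldots,M_k\}$ of $\bK\mhyphen\mathbf{LocSync}$, the image $\sigma(\mathcal{M})=\{\sigma(M_1),\ldots,\sigma(M_k)\}$ generates $\sigma(\langle\mathcal{M}\rangle)$, and $\mathcal{M}$ is location synchronizable iff $\sigma(\mathcal{M})$ is synchronizable as a set of Boolean matrices. The map $\sigma$ is computable in polynomial time (a single comparison per entry), so this is a polynomial-time reduction from $\bK\mhyphen\mathbf{LocSync}$ to $\bB\mhyphen\mathbf{Sync}$, which is in $\mathbf{PSPACE}$ by Theorem~\ref{thm-marty}.

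The main thing to check is really just the homomorphism property, which is where both positivity hypotheses are essential; dropping either of them allows cancellations or new zeros to appear in products and breaks the reduction. Note that this argument does \emph{not} adapt to $\bK\mhyphen\mathbf{Sync}$, since synchronizability (requiring equal nonzero entries) is not determined by the zero/nonzero pattern alone, so the support homomorphism discards the information needed in that problem.
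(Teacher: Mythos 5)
Your proof is correct and is essentially the paper's own argument: the same support morphism $\sigma\colon\bK\to\bB$, the observation that location synchronizability depends only on the zero pattern, and the polynomial-time reduction to $\bB\mhyphen\mathbf{Sync}$, whose $\mathbf{PSPACE}$ membership comes from Theorem~\ref{thm-marty}. The only difference is that you spell out the verification that positivity makes $\sigma$ a homomorphism on matrix products, which the paper asserts without detail.
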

\begin{proof}
For any positive semiring $\bK$ the mapping $\sigma:\bK\to\bB$ which maps $0$ to $0$ and all other elements of $K$ to $1$,
is a semiring morphism. Hence $\sigma$ can be extended pointwise to a semiring morphism $\sigma:\bK^{n\times n}\to\bB^{n\times n}$,
with $(\sigma(A))_{i,j}=\sigma(A_{i,j})$. Then, a matrix $A\in\bK^{n\times n}$ is \emph{location} synchronizing if and only if
$\sigma(A)$ is (location) synchronizing. Hence $\bK\mhyphen\mathbf{LocSync}$ can be reduced to $\bB\mhyphen\mathbf{Sync}$
via the polytime reduction $\{A_1,\ldots,A_k\}\mapsto\{\sigma(A_1),\ldots,\sigma(A_k)\}$, which is solvable in $\mathbf{PSPACE}$,
hence so is $\bK\mhyphen\mathbf{LocSync}$.
\end{proof}
\begin{remark}
One can use the above semiring morphism to decide any such property of matrices which cares only on the positions of zeroes
(i.e. when $M$ satisfies the property if and only if so does $\sigma(M)$).
Examples of such properties are \emph{mortality} (whether the all-zero matrix is generated),
and the \emph{zero-in-the-upper-left-corner} (whether a matrix with a zero in the upper-left corner is generated).
Thus both properties are in $\mathbf{PSPACE}$ for positive semirings
(and are in fact undecidable for the semiring $\bZ$, which is \emph{not} zero-sum-free).

Synchronizability, on the other hand, as well as the ``equal entries problem'' asking whether a matrix is generated having
the same entry at two specified positions, is not such a property. The latter is well-known to be undecidable in $\bN$
while the former is shown to be undecidable in Theorem~\ref{thm-n-sync}.
\end{remark}

\subsection{Undecidable subcases}
Now we turn our attention to undecidability results.

A well-known undecidable problem is the \emph{Fixed Post Correspondence Problem}, or FPCP for short: given a finite set
$\{(u_1,v_1),\ldots,(u_k,v_k)\}$ of pairs of nonempty words over a binary alphabet,
does there exist a nonempty index sequence $i_1,\ldots,i_t$, each $i_j$ in $[k]$, $t>0$ with $i_t=1$ (i.e. we fix the
\emph{last} used tile) such that
$u_{i_1}u_{i_2}\ldots u_{i_t}=v_{i_1}v_{i_2}\ldots v_{i_t}$? The problem is already undecidable for the fixed
constant $k=7$ (also, it's known to be decidable for $k=2$, see~\cite{halava} and has an unknown decidability status
for $3\leq k\leq 6$).

\begin{proposition}
For any semiring $\bK$ such that the semigroup $(\{a,b\}^*,\cdot)$ embeds into the multiplicative monoid $(K,\cdot,1)$ of $\bK$,
the $\bK\mhyphen\mathbf{Sync}$ problem is undecidable, even for two-state deterministic WFA with an alphabet size of
$8$ (i.e. for eight $2\times 2$ matrices when the question is viewed as a problem for matrices).
\end{proposition}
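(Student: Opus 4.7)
The plan is to reduce from the \emph{Fixed Post Correspondence Problem} (FPCP) with $k=7$ pairs, which is undecidable. Given an instance $\{(u_1,v_1),\ldots,(u_7,v_7)\}$ of pairs of nonempty binary words and a monoid embedding $\varphi\colon(\{a,b\}^*,\cdot)\hookrightarrow(K,\cdot,1)$, I would construct the following eight $2\times 2$ matrices: for each $i\in[7]$, let
$M_i=\begin{pmatrix}\varphi(u_i) & 0\\ 0 & \varphi(v_i)\end{pmatrix}$, and additionally $F=\begin{pmatrix}\varphi(u_1) & 0\\ \varphi(v_1) & 0\end{pmatrix}$.
Each matrix has exactly one nonzero entry per row, so this is a deterministic $2$-state $\bK$-WFA over an $8$-letter alphabet. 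Intuitively $M_i$ represents using tile $i$ in the middle and $F$ represents closing the sequence with tile $1$. Because $\varphi$ is an injective monoid homomorphism and the $u_i,v_i$ are nonempty, every $\varphi(u_i)$ and $\varphi(v_i)$ is nonzero, and $\varphi(x)=\varphi(y)\Longleftrightarrow x=y$ for all $x,y\in\{a,b\}^*$.

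For the easy direction, given an FPCP solution $(i_1,\ldots,i_{t-1},1)$, i.e.\ $u_{i_1}\cdots u_{i_{t-1}}u_1=v_{i_1}\cdots v_{i_{t-1}}v_1$, a direct computation yields $M_{i_1}\cdots M_{i_{t-1}}F=\begin{pmatrix}\varphi(U_0 u_1) & 0\\ \varphi(V_0 v_1) & 0\end{pmatrix}$ where $U_0=u_{i_1}\cdots u_{i_{t-1}}$ and $V_0=v_{i_1}\cdots v_{i_{t-1}}$, and by hypothesis $U_0 u_1=V_0 v_1$, so this is a synchronizing matrix.

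For the converse direction, I would argue that any product using only the $M_i$'s is diagonal with both diagonal entries nonzero (each entry being $\varphi$ of a nonempty word), hence never synchronizing. So any synchronizing product must use $F$ at least once; consider the first such occurrence at position $s$. A short induction shows that the preceding product has the form $\mathrm{diag}(\varphi(U_0),\varphi(V_0))$, whence after multiplying by $F$ the matrix becomes $\begin{pmatrix}\varphi(U_0 u_1) & 0\\ \varphi(V_0 v_1) & 0\end{pmatrix}$, in particular with a zero right column. One then checks by direct computation that any subsequent multiplication by an $M_j$ or another copy of $F$ preserves the zero right column and multiplies both entries of the left column by the same factor $\varphi(u_j)$ or $\varphi(u_1)$. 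Thus the final matrix has shape $\begin{pmatrix}\varphi(U_0 u_1 w) & 0\\ \varphi(V_0 v_1 w) & 0\end{pmatrix}$ for some $w\in\{a,b\}^*$, and it is synchronizing iff $U_0 u_1 w=V_0 v_1 w$, which by right cancellation in the free monoid is equivalent to $U_0 u_1=V_0 v_1$, i.e.\ the index sequence $(i_1,\ldots,i_{s-1},1)$ is an FPCP solution.

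The main obstacle is the spurious-solution bookkeeping: one must rule out that repeated uses of $F$ or the interleaving of $F$ with the $M_i$'s could accidentally produce a synchronizing matrix without corresponding to any genuine FPCP solution. This is precisely what the shape-preservation computation above handles, and it leans crucially on the injectivity of $\varphi$ on all of $\{a,b\}^*$ rather than merely on letters. Polynomial-time constructibility of the reduction is immediate (each $\varphi(u_i)$ is a product of $|u_i|$ generators), so undecidability of FPCP$_7$ transfers to $\bK\mhyphen\mathbf{Sync}$ on deterministic $2$-state WFA over an $8$-letter alphabet.
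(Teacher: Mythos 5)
Your proposal is correct and is essentially the paper's own proof: the same reduction from last-tile-fixed FPCP with diagonal matrices $\mathrm{diag}(\varphi(u_i),\varphi(v_i))$ for the tiles plus the single collapsing matrix $F=\bigl(\begin{smallmatrix}\varphi(u_1)&0\\\varphi(v_1)&0\end{smallmatrix}\bigr)$, and the same shape analysis (products before the first $F$ are diagonal, products after it keep a single nonzero column whose two entries are right-multiplied by the same word, so synchronization forces $U_0u_1w=V_0v_1w$ and cancellation gives a genuine FPCP solution). The only cosmetic difference is that the paper packages the computation as identities for the matrix families $A(u,v)$ and $B(u,v)$ rather than as a first-occurrence-of-$F$ induction.
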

\begin{proof}
In order to ease notation, suppose $(\{a,b\}^*,\cdot)$ is a subsemigroup of $(K,\cdot,1)$.
For words $u,v\in\{a,b\}^+$, let us define the matrices 
$A(u,v)=\left(\begin{array}{ll}u&0\\0&v\end{array}\right)$ and $B(u,v)=\left(\begin{array}{ll}u&0\\v&0\end{array}\right)$.
Then a direct computation shows that
\begin{align*}
A(u_1,v_1)A(u_2,v_2)&=A(u_1u_2,v_1v_2),\\
B(u_1,v_1)A(u_2,v_2)&=B(u_1u_2,v_1v_2),\\
B(u_1,v_1)A(u_2,v_2)=B(u_1,v_1)B(u_2,v_2)&=B(u_1u_2,v_1u_2).
\end{align*}
Also, matrices $A(u,v)$ are not synchronizing while matrices $B(u,v)$ are synchronizing iff $u=v$.
Moreover, a product $B(u_1,v_1)X$ is synchronizing for $X\in\langle\cup_{u,v\in\{a,b\}^+}\{A(u,v),B(u,v)\}\rangle$ iff $u_1=v_1$.
Thus we can derive that a product of the form  $X_1(u_1,v_2)X_2(u_2,v_2)\ldots X_k(u_k,v_k)$ with each $X_i$ being either $A$ or $B$
and $u_i,v_i\in\{0,1\}^+$ is synchronizing iff there exists some $t\in[k]$ such that $X_t=B$, $X_{t'}=A$ for each $t'<t$ and
$u_1\ldots u_t=v_1\ldots v_t$ holds.

Hence, a reduction from FPCP to $\bK\mhyphen\mathbf{Sync}$ is given by the transformation
\[\{(u_i,v_i):i\in[k]\}\quad\mapsto\quad\{A(u_i,v_i):i\in[k]\}\cup\{B(u_1,v_1)\}.\]
Since FPCP is undecidable, so is $\bK\mhyphen\mathbf{Sync}$.
\end{proof}

Note that $(\Sigma^*,\cup,\cdot,\emptyset,\{\varepsilon\})$ is positive, so its location synchronization problem is decidable in polynomial space, while when $|\Sigma|>1$, its synchronization problem becomes undecidable.

Now we give a polynomial-time reduction from the $\bK$-mortality problem to both of the $\bK$-synchronization and the $\bK$-location synchronization problem. The $\bK$-mortality problem is actively studied for the case $\bK=\bZ$:
\begin{definition}
For a fixed semiring $\bK$, the $\bK$-mortality problem is the following: given a finite set $\mathcal{M}=\{M_1,\ldots,M_k\}$
of matrices in $\bK^{n\times n}$ for some $n>0$, does $\langle \mathcal{M}\rangle$ contain the null matrix $\mathcal{O}_n$?
\end{definition}

\begin{proposition}
For any semiring $K$, the $\bK$-mortality problem reduces to both of $\bK\mhyphen\mathbf{Sync}$
and $\bK\mhyphen\mathbf{LocSync}$.
Thus, in particular, when $\bK$-mortality problem is undecidable, so are both synchronizability problems.
\end{proposition}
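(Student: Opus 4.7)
The plan is to give a polynomial-time reduction $\mathcal{M}\mapsto\mathcal{M}'$ that augments each matrix with one extra ``accumulator'' coordinate so that (location-)synchronization of $\mathcal{M}'$ captures mortality of $\mathcal{M}$. Given $\mathcal{M}=\{M_1,\ldots,M_k\}\subseteq\bK^{n\times n}$, I would set
\[
M_i'=\left(\begin{array}{cc}M_i&\mathbf{1}\\\mathbf{0}^T&1\end{array}\right)\in\bK^{(n+1)\times(n+1)},
\]
where $\mathbf{1}$ is the all-ones column, and take $\mathcal{M}'=\{M_1',\ldots,M_k'\}$. A direct induction on $|w|$ then gives $M_w'=\left(\begin{smallmatrix}M_w&b_w\\\mathbf{0}^T&1\end{smallmatrix}\right)$ with $b_w=\sum_{s=0}^{|w|-1}M_{w_1\cdots w_s}\mathbf{1}$.

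The forward direction is immediate: if $M_w'$ is location-synchronizing, then its last row $(\mathbf{0}^T,1)$ forces the synchronizing column to be column $n+1$, whence the condition that the first $n$ columns vanish yields $M_w=\mathcal{O}_n$, witnessing mortality of $\mathcal{M}$.

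The converse is the substantive step. From a mortality witness $M_u=\mathcal{O}_n$ the first $n$ columns of $M_u'$ are already zero, and only the last column $(b_u,1)^T$ needs to be entrywise nonzero. In zero-sum-free $\bK$ this is automatic, since $\mathbf{1}$ is a summand of $b_u$ that cannot be cancelled. For general $\bK$, I would further adjoin to $\mathcal{M}'$ an augmented identity $I'=\left(\begin{smallmatrix}I_n&\mathbf{1}\\\mathbf{0}^T&1\end{smallmatrix}\right)$ (adding $I_n$ to $\mathcal{M}$ does not affect mortality); then $(I')^kM_u'$ has last column $b_u+k\mathbf{1}$, which by choosing $k$ appropriately can be made all-nonzero, yielding loc-sync. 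For $\bK\mhyphen\mathbf{Sync}$, one finishes by left-multiplying with a rank-one matrix of the form $\mathbf{1}\mathbf{v}^T$ that flattens the last column to a constant nonzero vector, producing equal entries. The main obstacle is controlling possible cancellation in $b_u$ for arbitrary $\bK$; the padding-and-flattening trick suffices for the key applications, notably $\bK=\bZ$ where mortality is known to be undecidable, yielding the claimed transfer of undecidability.
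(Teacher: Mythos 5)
Your forward direction is correct, and for $\bK\mhyphen\mathbf{LocSync}$ over zero-sum-free semirings your construction does work, but the proposition claims a reduction for \emph{every} semiring, and both of your patches for the converse have genuine gaps. First, the claim that $b_u+k\mathbf{1}$ can be made entrywise nonzero by choosing $k$ fails once $1$ has finite additive order: over the two-element field (addition modulo $2$) take $n=2$ and $M_1=\left(\begin{smallmatrix}0&1\\0&0\end{smallmatrix}\right)$, so $M_{11}=\mathcal{O}_2$ but $b_{11}=(0,1)^T$, and $b_{11}+k\mathbf{1}=(k,1+k)^T$ has a zero entry for every $k$; thus $(I')^kM_u'$ is never location-synchronizing, and you give no alternative argument for such semirings. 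Second, the ``flattening'' matrix $\mathbf{1}\mathbf{v}^T$ must be added to the generating set, and then soundness has to be re-proved for products using it --- and it breaks: with $\mathbf{v}=e_{n+1}$ the matrix $\mathbf{1}e_{n+1}^T$ is itself synchronizing, and even with mixed $\mathbf{v}$ one gets spurious witnesses, e.g.\ for $n=1$, $M_1=(1)$ over $\bZ$ and $\mathbf{v}=(1,-1)$ the product $\mathbf{1}\mathbf{v}^T M_1'=\mathbf{1}\cdot(1,0)$ is synchronizing although $\{M_1\}$ is not mortal. Since your $\bK\mhyphen\mathbf{Sync}$ half always relies on this flattening step, that half is not established even in the zero-sum-free case. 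The root cause is that baking the all-ones column into every generator turns the extra coordinate into an additive accumulator $b_w$ whose entries you can neither force nonzero nor equalize in an arbitrary semiring.

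The paper's construction avoids accumulation altogether: it uses $A_i=\left(\begin{smallmatrix}1&\mathbf{0}\\\mathbf{0}&M_i\end{smallmatrix}\right)$ together with a single auxiliary generator $A_0=\left(\begin{smallmatrix}1&\mathbf{0}\\\mathbf{1}&I_n\end{smallmatrix}\right)$. Every product of these generators has the form $\left(\begin{smallmatrix}1&\mathbf{0}\\X&N\end{smallmatrix}\right)$ with $N$ the corresponding product of the $M_i$ (and of $I_n$ for each occurrence of $A_0$), so a (location-)synchronizing product forces $N=\mathcal{O}_n$, hence mortality; conversely, if $M_{i_1}\cdots M_{i_t}=\mathcal{O}_n$ then $A_0A_{i_1}\cdots A_{i_t}=\left(\begin{smallmatrix}1&\mathbf{0}\\\mathbf{1}&\mathcal{O}_n\end{smallmatrix}\right)$, which is synchronizing. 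Because the first column of $A_{i_1}\cdots A_{i_t}$ is the unit vector $(1,0,\ldots,0)^T$, every sum arising in that final product has at most one nonzero term, so no cancellation can occur in any semiring, and the surviving column is constantly $1$ --- settling $\bK\mhyphen\mathbf{Sync}$ and $\bK\mhyphen\mathbf{LocSync}$ simultaneously, with no flattening matrix and no case distinction on $\bK$. To salvage your version you would need the same effect: keep the tag coordinate free of accumulated sums and introduce the all-ones column through one auxiliary matrix whose product with a mortality witness degenerates to a single term per entry.
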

\begin{proof}
Let $\mathcal{M}=\{M_1,\ldots,M_k\}$ be an instance of the $\bK$-mortality problem.
We define the matrices $A_i=\left(\begin{array}{ll}1&\mathbf{0}\\\mathbf{0}&M_i\end{array}\right)$, i.e. adding an all-zero top row and an all-zero
first row to each $M_i$, $i\in[k]$ and fill the upper-left corner by $1$. Also, we define $A_0=\left(\begin{array}{ll}1&\mathbf{0}\\\mathbf{1}&I_n\end{array}\right)$. We claim that the following are equivalent:
\begin{enumerate}
\item $\mathcal{O}_n\in\langle \mathcal{M}\rangle$;
\item $\mathcal{A}=\{A_i:0\leq i\leq k\}$ is synchronizable;
\item $\mathcal{A}$ is location synchronizable.
\end{enumerate}
Observe that each member of $\mathcal{A}$ is block-lower triangular with $1$ in the upper left corner, hence for any product
$A=A_{i_1}A_{i_2}\ldots A_{i_t}$ we have $A=\left(\begin{array}{ll}1&0\\X&M_{i_1}M_{i_2}\ldots M_{i_t}\end{array}\right)$ for some column vector $X$.
Note that in order to ease notation we define $M_0$ as the unit matrix $I_n$ and set $\mathcal{M}=\{M_0,\ldots,M_k\}$ -- since $I_n$ is not
synchronizing and is the unit element of $\bK^{n\times n}$, this neither affects mortality (of $\mathcal{M}$) nor synchronizability
(of $\mathcal{A}$).

Thus in particular the first column of any matrix $A\in\langle\mathcal{M}\rangle$ contains a nonzero entry, hence $A$ is (location) synchronizing
only if $M_{i_1}M_{i_2}\ldots M_{i_t}=\mathcal{O}_n$, in which case $\mathcal{M}$ is indeed a positive instance of the $\bK$-mortality problem,
showing iii)$\to$ i). For i)$\to$ii), let $A_{i_1}\ldots A_{i_t}=\mathcal{O}_n$, $t>0$, $i_j\in[k]$. Then $M:=M_{i_1}\ldots M_{i_t}=\left(\begin{array}{ll}1&\mathbf{0}\\\mathbf{0}&\mathcal{O}_n\end{array}\right)$, thus $A_0M=\left(\begin{array}{ll}1&\mathbf{0}\\\mathbf{1}&\mathcal{O}_n\end{array}\right)$ is a synchronizing matrix. Finally, ii)$\to$iii) is clear for any $\mathcal{A}$.
\end{proof}
In particular, since mortality is undecidable in $\bZ$, so are $\bZ\mhyphen\mathbf{Sync}$and $\bZ\mhyphen\mathbf{LocSync}$.

Our most involved result on undecidability is the following one:
\begin{theorem}
\label{thm-n-sync}
$\bN\mhyphen\mathbf{Sync}$ is undecidable.
Thus if $\bN$ embeds into $\bK$ (i.e. when $1$ has infinite order in $(K,+,0)$), then so is $\bK\mhyphen\mathbf{Sync}$.
\end{theorem}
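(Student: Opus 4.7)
\emph{Plan.} I would reduce from the Post Correspondence Problem (PCP), which is undecidable even for binary instances. Given $\{(u_i,v_i)\}_{i\in[k]}$ with $u_i,v_i\in\{0,1\}^+$, encode words as base-$3$ integers: let $\#w\in\bN$ be the integer whose base-$3$ representation is obtained from $w$ by $0\mapsto 1$, $1\mapsto 2$. This encoding is injective and satisfies $\#(xy)=\#x\cdot 3^{|y|}+\#y$. For each tile define the Paterson-style matrix
\[
A_i=\left(\begin{array}{lll}3^{|u_i|}&0&0\\0&3^{|v_i|}&0\\ \#u_i&\#v_i&1\end{array}\right)\in\bN^{3\times 3},
\]
and a routine calculation shows that $A_{i_1}\cdots A_{i_t}$ has the same shape with $3^{|u|},3^{|v|},\#u,\#v$ replacing the tile-level quantities, where $u=u_{i_1}\cdots u_{i_t}$ and $v=v_{i_1}\cdots v_{i_t}$. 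The PCP instance is positive iff some such product has coinciding $(3,1)$ and $(3,2)$ entries, i.e.\ $\#u=\#v$.

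\emph{From entry equality to synchronizability.} The plan is then to extend $\{A_1,\ldots,A_k\}$ by a small number of auxiliary $\bN$-matrices (working in a slightly larger block structure, padding the $A_i$ with identity-like rows and columns) so that the augmented family generates a synchronizing matrix precisely when some product $A_{i_1}\cdots A_{i_t}$ satisfies $\#u=\#v$. The idea is to introduce a ``finalizer'' matrix which, when multiplied on the right of such a product, kills all but one column and places the two critical entries $\#u,\#v$ (together with matching constants contributed by the padding rows) into the surviving column. Synchronizability of the resulting matrix requires all entries of that column to be equal, and the construction is set up so that this column-collision can occur exactly when $\#u=\#v$.

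\emph{Main obstacle.} The technical heart is engineering both soundness and completeness of this reduction in the absence of subtraction in $\bN$: one must arrange a genuine column-collision between $\#u$ and $\#v$ in some reachable product, and simultaneously rule out any spurious synchronizing matrix obtainable from other products in the monoid (e.g.\ via repeated or interleaved use of the finalizer, or via products that never pass through the PCP-simulating phase). Concretely, one must show that any synchronizing matrix in the generated monoid must factor as $A_{i_1}\cdots A_{i_t}$ followed by the finalizer, and that the surviving-column equalities cannot be satisfied trivially by the padding constants. This case analysis on the shape of products in the augmented monoid, and the choice of padding constants that distinguish ``good'' from ``bad'' products, is where the bulk of the work lies; everything else is verification by direct matrix multiplication.

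\emph{Extension to $\bK$.} For the second half, assume $1$ has infinite order in $(K,+,0)$. Then the map $\phi:\bN\to K$, $n\mapsto \underbrace{1+\cdots+1}_{n}$, is an injective semiring homomorphism. Applied entrywise, it induces an injective monoid homomorphism $\bN^{n\times n}\to\bK^{n\times n}$ which preserves both the zero pattern of a matrix and equality between any two entries, hence preserves the property of being synchronizing. Therefore any instance $\mathcal{M}\subseteq\bN^{n\times n}$ of $\bN\mhyphen\mathbf{Sync}$ reduces in polynomial time to the instance $\{\phi(M):M\in\mathcal{M}\}\subseteq\bK^{n\times n}$ of $\bK\mhyphen\mathbf{Sync}$, transferring undecidability from $\bN$ to $\bK$.
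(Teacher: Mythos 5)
Your second half (the transfer from $\bN$ to $\bK$ via the entrywise homomorphism $n\mapsto n\cdot 1$, which is injective exactly when $1$ has infinite order in $(K,+,0)$ and therefore preserves both zero-patterns and entry equalities in both directions) is correct and is essentially what the paper intends. The problem is the first half: for the undecidability of $\bN\mhyphen\mathbf{Sync}$ you only give a plan, and the step you defer --- ``one must show that any synchronizing matrix in the generated monoid must factor as $A_{i_1}\cdots A_{i_t}$ followed by the finalizer, and that the surviving-column equalities cannot be satisfied trivially'' --- is not a routine verification but the entire mathematical content of the theorem. Nothing in your proposal specifies the padding, the finalizer, or the constants, so there is no construction whose soundness could even be checked; what remains is exactly the open part.

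Moreover, the specific starting point you chose is one the paper explicitly identifies as problematic in $\bN$. With the Paterson-style $3\times 3$ encoding, the quantities $3^{|u|}$ and $3^{|v|}$ sit on the diagonal and, lacking subtraction, cannot be cancelled or overwritten; any column-summing finalizer drags them into the surviving column alongside $\#u$ and $\#v$, which is precisely the obstruction the author mentions when explaining why the Halava-style tricks ($BM(u,v)B=(4^{|u|}+\mathrm{int}(u)-\mathrm{int}(v))B$, or conjugation $TM(u,v)T^{-1}$) fail over $\bN$, where only permutation matrices are invertible. The paper's actual proof handles this with three nontrivial devices that your sketch has no counterpart for: (i) it reduces from the \emph{fixed-first-tile} variant of PCP and hard-wires the tile $(u_1,v_1)$ into an auxiliary matrix $B$; (ii) it duplicates the $u$-block inside each $6\times 6$ matrix $A_i$ so that every product in $\langle\{A_i\}\cup\{B\}\rangle$ has the block shape $\bigl(\begin{smallmatrix}X&nX&0\\0&X&0\\0&0&Y\end{smallmatrix}\bigr)$ with $n$ counting the occurrences of $B$, and synchronizability after the finalizer $C$ forces $n=1$; (iii) the entries of $B$ are arranged so that a synchronizing product forces $3^{|u'|}=\mathrm{int}(u')+1$, i.e.\ no $A_i$ may precede $B$, after which equality of the surviving column is exactly $u_1u_{i_2}\cdots u_{i_t}=v_1v_{i_2}\cdots v_{i_t}$. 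Without an analogue of these arguments (or some other concrete mechanism ruling out spurious synchronizing products), your reduction is not established, so the proposal has a genuine gap at its core.
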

\begin{proof}
We give a polynomial-time reduction from the FPCP problem to $\bN\mhyphen\mathbf{Sync}$.
This time we use the variant of FPCP in which the \emph{first} tile is fixed to $(u_1,v_1)$.
Let $\{(u_i,v_i):i\in[k]\}$ be an instance of the FPCP, $u_i,v_i\in\{0,1\}^+$.
For a nonempty word $u\in\{0,1\}^+$ let $\textrm{int}(u)$ be its value when considered as a \emph{ternary} number, i.e.
$\textrm{int}(a_{n-1}\ldots a_0)=\sum_{0\leq i<n}a_i3^i$.
Also, we define for each word $u$ a matrix $M(u)=\left(\begin{array}{ll}3^{|u|}&0\\\textrm{int}(u)&1\end{array}\right)$.
Then, since $\textrm{int}(uv)=3^{|v|}\textrm{int}(u)+\textrm{int}(v)$, we get that $M(u)M(v)=M(uv)$ and since the mapping $u\mapsto M(u)$
is also injective, it is an embedding of the semigroup $(\{0,1\}^+,\cdot)$ into $\bN^{2\times 2}$.

We define the following matrices $A_i$, $i\in[k]$, $B$ and $C$, all in $\bN^{6\times 6}$:
\begin{align*}
A_i &=\left(\begin{array}{lll}M(u_i)&0&0\\0&M(u_i)&0\\0&0&M(v_i)\end{array}\right),\\
B &=
\left(\begin{array}{llllll}
\mathrm{int}(u_1)&1&\mathrm{int}(u_1)&1&0&0\\
\mathrm{int}(u_1)&1&\mathrm{int}(u_1)&1&0&0\\
0&0&\mathrm{int}(u_1)&1&0&0\\
0&0&\mathrm{int}(u_1)&1&0&0\\
0&0&0&0&\mathrm{int}(v_1)&1\\
0&0&0&0&\mathrm{int}(v_1)&1\\
\end{array}\right),\\
C &= \left(\begin{array}{llllll}0&0&0&0&0&0\\0&0&0&0&0&0\\1&0&0&0&0&0\\0&0&0&0&0&0\\1&0&0&0&0&0\\0&0&0&0&0&0\end{array}\right),
\end{align*}
that is, $C$ has exactly two nonzero entries, namely $C_{3,1}=C_{5,1}=1$.

Then for any sequence $i_2,\ldots,i_t$, $t\geq 1$ we have
\[A_{i_2}\ldots A_{i_t}=\left(\begin{array}{lll}M(u)&0&0\\0&M(u)&0\\0&0&M(v)\end{array}\right)\]
with $u=u_{i_2}\ldots u_{i_t}$ and $v=v_{i_2}\ldots v_{i_t}$ and also
\[BA_{i_2}\ldots A_{i_t}=
\left(\begin{array}{llllll}
\mathrm{int}(u_1u)&\mathrm{int}(u)&\mathrm{int}(u_1u)&\mathrm{int}(u)&0&0\\
\mathrm{int}(u_1u)&\mathrm{int}(u)&\mathrm{int}(u_1u)&\mathrm{int}(u)&0&0\\
0&0&\mathrm{int}(u_1u)&\mathrm{int}(u)&0&0\\
0&0&\mathrm{int}(u_1u)&\mathrm{int}(u)&0&0\\
0&0&0&0&\mathrm{int}(v_1v)&\mathrm{int}(v)\\
0&0&0&0&\mathrm{int}(v_1v)&\mathrm{int}(v)\\
\end{array}\right),\]
and thus
\[BA_{i_2}\ldots A_{i_t}C=
\left(\begin{array}{llllll}
\mathrm{int}(u_1u)&0&0&0&0&0\\
\mathrm{int}(u_1u)&0&0&0&0&0\\
\mathrm{int}(u_1u)&0&0&0&0&0\\
\mathrm{int}(u_1u)&0&0&0&0&0\\
\mathrm{int}(v_1v)&0&0&0&0&0\\
\mathrm{int}(v_1v)&0&0&0&0&0
\end{array}\right),\]
which is synchronizing if and only if $u_1u_{i_2}\ldots u_{i_t}=v_1v_{i_2}\ldots v_{i_t}$, hence if $\{(u_i,v_i):i\in[k]\}$ is a positive
instance of FPCP, then $\mathcal{M}=\{A_i:i\in[k]\}\cup\{B,C\}$ is synchronizable.

For the other direction, suppose $\mathcal{M}$ is synchronizable. We already argued that any member $A$ of $\langle \{A_i:i\in[k]\} \rangle$ has the form $\left(\begin{array}{lll}M(u)&0&0\\0&M(u)&0\\0&0&M(v)\end{array}\right)$ for words $u,v$ with $u=u_{i_1}u_{i_2}\ldots u_{i_t}$ and
$v=v_{i_1}v_{i_2}\ldots v_{i_t}$ for some $i_j\in[k]$, $t\geq 0$. These matrices are clearly not (location) synchronizing.

Considering the matrix $C$, we have the following claims:

{\sl Claim A.} For any matrix $X$ we have $XC=\left(\begin{array}{l}c_1\\c_2\\\vdots\\c_6\end{array}{\textrm{\Huge{\quad$0$\quad}}}\right)$ for some $c_1,\ldots,c_6\in \bN$.

{\sl Claim B.} If $XCY$ is synchronizing for some matrices $X$ and $Y$, then so is $XC$.

Indeed, $XC$ is the matrix whose first column is the sum of the third and the fifth column of $X$, and whose other entries are all zero.
Also, if $XC=\left(\begin{array}{l}c_1\\c_2\\\vdots\\c_6\end{array}{\textrm{\Huge{\quad$0$\quad}}}\right)$ then
$XCY=\left(\begin{array}{l}c_1r_1\\c_2r_1\\\vdots\\c_6r_1\end{array}\right)$ where $r_1$ is the first row of $Y$.
If $XCY$ is synchronizing, this implies $c_ir_1=c_jr_1\neq 0$ for each $i,j\in[6]$, hence $c_i=c_j$ and $XC$ is synchronizing as well.

Thus, by ii) above we get that if $\mathcal{M}$ is synchronizable, then there is a synchronizing matrix of the form $XC$ with $X\in\langle \{A_i:i\in[k]\}\cup\{B\}\rangle$.

Inspecting members of $\langle\{A_i:i\in[k]\}\cup\{B\}\rangle$ we get the following claim:

{\sl Claim C.} Let $\mathcal{A}$ stand for the matrix semigroup $\langle\{A_i:i\in[k]\}\rangle$. Then for any $n\geq 0$, any member of
$\mathcal{A}(B\mathcal{A})^n$ has the form $\left(\begin{array}{lll}X&nX&0\\0&X&0\\0&0&Y\end{array}\right)$ for some
matrices $X,Y\in\bN^{2\times 2}$.

Indeed, for the base case $n=0$ we have matrices of the form $\left(\begin{array}{lll}M(u)&0&0\\0&M(u)&0\\0&0&M(v)\end{array}\right)$
satisfying the condition. Suppose the claim holds for $n$ and consider a matrix $M\in\mathcal{A}(B\mathcal{A})^{n+1}=\mathcal{A}(B\mathcal{A})^nB\mathcal{A}$. By the induction hypothesis, $M=M_0BA$ with $M_0=\left(\begin{array}{lll}X&nX&0\\0&X&0\\0&0&Y\end{array}\right)$,
and $A=\left(\begin{array}{lll}M(u)&0&0\\0&M(u)&0\\0&0&M(v)\end{array}\right)$ for some $X,Y\in\bN^{2\times 2}$ and words $u,v$.
We can also write $U_1$ for $\left(\begin{array}{ll}\mathrm{int}(u_1)&1\\\mathrm{int}(u_1)&1\end{array}\right)$
and $V_1$ for $\left(\begin{array}{ll}\mathrm{int}(v_1)&1\\\mathrm{int}(v_1)&1\end{array}\right)$.
Calculating the product we get
\begin{align*}
M=M_0BA
&=
\left(\begin{array}{lll}X&nX&0\\0&X&0\\0&0&Y\end{array}\right)
\left(\begin{array}{lll}
U_1&U_1&0\\ 0&U_1&0\\ 0&0&V_1
\end{array}\right)
\left(\begin{array}{lll}M(u)&0&0\\0&M(u)&0\\0&0&M(v)\end{array}\right)\\
&=
\left(\begin{array}{lll}XU_1M(u)&(n+1)XU_1M(u)&0\\ 0&XU_1M(u)&0\\ 0&0&YV_1M(v)\end{array}\right),
\\
\end{align*}
showing the claim.

Thus, since $\langle\{A_i:i\in[k]\}\cup\{B\}\rangle=\mathop\bigcup\limits_{n\geq 0}\mathcal{A}(B\mathcal{A})^n$, we get by Claim B that
if $\mathcal{M}$ is synchronizable, then there is a synchronizing matrix of the form $\left(\begin{array}{lll}X&nX&0\\0&X&0\\0&0&Y\end{array}\right)C$.
Writing $X=\left(\begin{array}{ll}x_1&x_2\\x_3&x_4\end{array}\right)$ and
$Y=\left(\begin{array}{ll}y_1&y_2\\y_3&y_4\end{array}\right)$ we get that this product is further equal to
$\left(\begin{array}{l}nx_1\\nx_3\\x_1\\x_3\\y_1\\y_3\end{array}\textrm{\quad\Huge{$0$}\quad}\right)$ which is synchronizing if and only if
$n=1$ and $x_1=x_3=y_1=y_3\neq 0$. By $n=1$ we get that if $\mathcal{M}$ is synchronizable, then there is a synchronizing matrix of the form
\[X=A_{j_1}A_{j_2}\ldots A_{j_\ell}BA_{i_2}A_{i_3}\ldots A_{i_t}C,\]
with $\ell\geq 0$, $t\geq 1$, $j_r,i_r\in[k]$.
Writing $u=u_1u_{i_2}\ldots u_{i_t}$, $v=v_1v_{i_2}\ldots v_{i_t}$, $u'=u_{j_1}\ldots u_{j_\ell}$ and $v'=v_{j_1}\ldots v_{j_\ell}$
we can write
\begin{align*}
X&=A_{j_1}A_{j_2}\ldots A_{j_\ell}BA_{i_2}A_{i_3}\ldots A_{i_t}C\\
&=
  \left(\begin{array}{lll}M(u')&0&0\\0&M(u')&0\\0&0&M(v')\end{array}\right)
  \left(\begin{array}{llllll}
  \mathrm{int}(u_1u)&1&\mathrm{int}(u_1u)&1&0&0\\
  \mathrm{int}(u_1u)&1&\mathrm{int}(u_1u)&1&0&0\\
  0&0&\mathrm{int}(u_1u)&1&0&0\\
  0&0&\mathrm{int}(u_1u)&1&0&0\\
  0&0&0&0&\mathrm{int}(v_1v)&1\\
  0&0&0&0&\mathrm{int}(v_1v)&1\\
  \end{array}\right)
  C\\
&=
\footnotesize{  \left(\begin{array}{llllll}
  3^{|u'|}\mathrm{int}(u_1u)&3^{|u'|}&3^{|u'|}\mathrm{int}(u_1u)&3^{|u'|}&0&0\\
  (\mathrm{int}(u')+1)\cdot\mathrm{int}(u_1u)&\mathrm{int}(u')+1&(\mathrm{int}(u')+1)\cdot\mathrm{int}(u_1u)&\mathrm{int}(u')+1&0&0\\
  0&0&3^{|u'|}\mathrm{int}(u_1u)&3^{|u'|}&0&0\\
  0&0&(\mathrm{int}(u')+1)\cdot\mathrm{int}(u_1u)&\mathrm{int}(u')+1&0&0\\
  0&0&0&0&3^{|v'|}\mathrm{int}(v_1v)&3^{|v'|}\\
  0&0&0&0&(|v'|+1)\cdot\mathrm{int}(v_1v)&\mathrm{int}(v')\\
  \end{array}\right)
  C}\\
&=\left(\begin{array}{l}
  3^{|u'|}\mathrm{int}(u_1u)\\(\mathrm{int}(u')+1)\cdot\mathrm{int}(u_1u)\\
  3^{|u'|}\mathrm{int}(u_1u)\\(\mathrm{int}(u')+1)\cdot\mathrm{int}(u_1u)\\
  3^{|v'|}\mathrm{int}(v_1v)\\(\mathrm{int}(v')+1)\cdot\mathrm{int}(v_1v)\\
  \end{array}\textrm{\Huge{\quad$0$\quad}}\right)
\end{align*}
which is synchronizing only if $3^{|u'|}=\mathrm{int}(u')+1$ and $3^{|v'|}=\mathrm{int}(v')+1$, that is, $u'=v'=\varepsilon$
implying $\ell=0$.

Hence if $\mathcal{M}$ is synchronizable then there exists a synchronizing product of the form
$BA_{i_2}A_{i_3}\ldots A_{i_t}C$, which in turn implies $u_1u_{i_2}\ldots u_{i_t}=v_1v_{i_2}\ldots v_{i_t}$, thus in that case
$\{(u_i,v_i):i\in[k]\}$ is indeed a positive instance of the FPCP problem.
\end{proof}

We note that the idea of encoding of a PCP variant within matrix semirings is not new, see e.g.~\cite{halavahirvensalo,potapov,gaubert}.
For example, $\mathbf{Z}$-mortality can be shown to be undecidable for $3\times 3$ integral matrices
via a similar embedding $(u,v)\mapsto M(u,v)=\left(\begin{array}{lll}4^{|u|}&0&0\\0&4^{|v|}&0\\\mathrm{int}(u)&\mathrm{int}(v)&1\end{array}\right)$ as in 
the proof of Theorem~\ref{thm-n-sync}, with $\mathrm{int}(u)$ being the base-4 value of $u$. This mapping is also an injective monoid
homomorphism. Then, defining $B=\left(\begin{array}{lll}0&0&0\\-1&0&-1\\0&0&0\end{array}\right)$ which satisfies
$B^2=B$ and $BM(u,v)B=(4^{|u|}+\mathrm{int}(u)-\mathrm{int}(v))B$ we get a similar construction (cf.~\cite{Halava97decidableand}),
also suitable for showing the undecidability of the zero-in-the-upper-left-corner problem.
However, the lack of substraction (in general, zero-sum-freeness of $\bN$) prevents us to apply this method.
Also, defining matrices of the form $TM(u,v)T^{-1}$ for a suitable $T$ (as in~\cite{Halava01mortalityin}, see also~\cite{paterson}) is again out of question
since in $\bN^{k\times k}$, only permutation matrices are invertible.
The most closest approach is that of the equal entries problem: in the proof we also showed undecidability of the
problem whether $\mathcal{A}$ generates a matrix having equal entries in the top-left corner and in entry $(5,5)$.
Actually, the embedding $(u,v)\mapsto\left(\begin{array}{ll}M(u)&0\\0&M(v)\end{array}\right)$ shows the same for $4\times 4$ matrices.
However, we were unable to modify the construction for $4\times 4$ matrices to \emph{shift} the values $\mathrm{int}(u)$ and
$\mathrm{int}(v)$ into, say, the first column and at the same time, \emph{overwrite} the values $3^{|u|}$ and $3^{|v|}$
by $\mathrm{int}(u)$ and $\mathrm{int}(v)$, respectively. (Adding them or something similar did not seem to work, either.)
That's why we had to use $6\times 6$ matrices -- it is quite plausible that the encoding is not the most compact possible
and the dimension can be further lowered.

\section{Conclusion, future directions}
We generalized the notion of synchronizability to automata with transitions weighted in an arbitrary semiring in two ways:
one of them, location synchronizability requires the existence of a word $u$ and a state $q$ such that starting from any state $p$,
$q$ and only $q$ has a nonzero weight after $u$ is being read; synchronizability additionally requires that this nonzero weight
is the same for all states $p$. In this paper we studied the \emph{complexity} of checking these properties, parametrized by the underlying
semiring.

Our results can be summarised as follows:
\begin{itemize}
\item Both problems are $\mathbf{PSPACE}$-hard for any nontrivial semiring.
\item For finite semirings, they are $\mathbf{PSPACE}$-complete.
\item For positive semirings, location synchronizability is $\mathbf{PSPACE}$-complete.
\item For locally finite semirings they are decidable (provided that the addition and product operations of the semiring are computable).
\item The mortality problem reduces to both problems in any semiring. Thus for semirings having an undecidable mortality problem,
  both variants of synchronization are undecidable. (This is the case for $\bZ$.)
\item If $(\{0,1\}^+,\cdot,\varepsilon)$ embeds into the multiplicative structure of $\bK$, then synchronizability is undecidable
  for $\bK$, even for deterministic automata.
\item Synchronizability is undecidable for any semiring where $1$ has infinite order in the additive semigroup. (This is the case for $\bN$.
  Note that for $\bN$, location synchronizability is in $\mathbf{PSPACE}$.)
\end{itemize}
We do not have any decidability results for $\bK$-synchronizability when the semiring $\bK$ is not locally finite, the element $1$ has a finite
order in the additive structure, and $\{0,1\}^+$ does not embed into the multiplicative semigroup. Also, it is not clear whether synchronizability
can be reduced to location synchronizability in general -- since in $\bN$, location synchronizability is decidable but synchronizability is
undecidable, so in general, synchronizability cannot be Turing-reduced to location synchronizability. It is also an interesting question whether
$\bN$-synchronizability of $5$-state automata is decidable or not -- we conjecture that it is still undecidable and one can use a slightly more compact
encoding of FPCP. Also, to cover the existing generalizations of synchronizability for the case of the probabilistic semiring,
we could study semirings that are equipped with a metric -- our current investigations can be seen as the case of this perspective
where the metric is the dicrete unit-distance metric.

\subsubsection*{Acknowledgement}
The research was supported by the European Union and the State of Hungary,
co-financed by the European Social Fund in the framework of T\'AMOP-4.2.4.A/ 2-11/1-2012-0001 ``National Excellence Program''.

The author is thankful to the anonymous referees for their suggestions and detailed comments.

\bibliographystyle{eptcs}
\bibliography{biblio-afl}

\begin{thebibliography}{10}
\providecommand{\bibitemdeclare}[2]{}
\providecommand{\surnamestart}{}
\providecommand{\surnameend}{}
\providecommand{\urlprefix}{Available at }
\providecommand{\url}[1]{\texttt{#1}}
\providecommand{\href}[2]{\texttt{#2}}
\providecommand{\urlalt}[2]{\href{#1}{#2}}
\providecommand{\doi}[1]{doi:\urlalt{http://dx.doi.org/#1}{#1}}
\providecommand{\bibinfo}[2]{#2}

\bibitemdeclare{techreport}{larsen}
\bibitem{larsen}
\bibinfo{author}{Laurent \surnamestart Doyen\surnameend}, \bibinfo{author}{Line
  \surnamestart Juhl\surnameend}, \bibinfo{author}{Kim~G. \surnamestart
  Larsen\surnameend}, \bibinfo{author}{Nicolas \surnamestart Markey\surnameend}
  \& \bibinfo{author}{Mahsa \surnamestart Shirmohammadi\surnameend}
  (\bibinfo{year}{2013}): \emph{\bibinfo{title}{Synchronizing Words for Timed
  and Weighted Automata}}.
\newblock \bibinfo{type}{Research Report} \bibinfo{number}{LSV-13-15},
  \bibinfo{institution}{Laboratoire Sp{\'e}cification et V{\'e}rification, ENS
  Cachan, France}.
\newblock \urlprefix\url{http://www.lsv.ens-cachan.fr/Publis/RAPPORTS_LSV/PDF/
  rr-lsv-2013-15.pdf}.
\newblock \bibinfo{note}{26~pages}.

\bibitemdeclare{incollection}{doyen}
\bibitem{doyen}
\bibinfo{author}{Laurent \surnamestart Doyen\surnameend},
  \bibinfo{author}{Thierry \surnamestart Massart\surnameend} \&
  \bibinfo{author}{Mahsa \surnamestart Shirmohammadi\surnameend}
  (\bibinfo{year}{2011}): \emph{\bibinfo{title}{Infinite Synchronizing Words
  for Probabilistic Automata}}.
\newblock In \bibinfo{editor}{Filip \surnamestart Murlak\surnameend} \&
  \bibinfo{editor}{Piotr \surnamestart Sankowski\surnameend}, editors: {\sl
  \bibinfo{booktitle}{Mathematical Foundations of Computer Science 2011}}, {\sl
  \bibinfo{series}{Lecture Notes in Computer Science}} \bibinfo{volume}{6907},
  \bibinfo{publisher}{Springer Berlin Heidelberg}, pp.
  \bibinfo{pages}{278--289}, \doi{10.1007/978-3-642-22993-0_27}.

\bibitemdeclare{article}{gaubert}
\bibitem{gaubert}
\bibinfo{author}{Stephane \surnamestart Gaubert\surnameend} \&
  \bibinfo{author}{Ricardo \surnamestart Katz\surnameend}
  (\bibinfo{year}{2006}): \emph{\bibinfo{title}{Reachability Problems for
  Products of Matrices in Semirings.}}
\newblock {\sl \bibinfo{journal}{IJAC}}
  \bibinfo{volume}{16}(\bibinfo{number}{3}), pp. \bibinfo{pages}{603--627}.
\newblock
  \urlprefix\url{http://dblp.uni-trier.de/db/journals/ijac/ijac16.html#Gaubert%
K06}.

\bibitemdeclare{article}{gazdag}
\bibitem{gazdag}
\bibinfo{author}{Zsolt \surnamestart Gazdag\surnameend},
  \bibinfo{author}{Szabolcs \surnamestart Iv\'{a}n\surnameend} \&
  \bibinfo{author}{Judit \surnamestart Nagy-Gy\"{o}rgy\surnameend}
  (\bibinfo{year}{2009}): \emph{\bibinfo{title}{Improved Upper Bounds on
  Synchronizing Nondeterministic Automata}}.
\newblock {\sl \bibinfo{journal}{Inf. Process. Lett.}}
  \bibinfo{volume}{109}(\bibinfo{number}{17}), pp. \bibinfo{pages}{986--990},
  \doi{10.1016/j.ipl.2009.05.007}.

\bibitemdeclare{misc}{Halava97decidableand}
\bibitem{Halava97decidableand}
\bibinfo{author}{Vesa \surnamestart Halava\surnameend} (\bibinfo{year}{1997}):
  \emph{\bibinfo{title}{Decidable and Undecidable Problems in Matrix Theory}}.

\bibitemdeclare{article}{Halava01mortalityin}
\bibitem{Halava01mortalityin}
\bibinfo{author}{Vesa \surnamestart Halava\surnameend} \& \bibinfo{author}{Tero
  \surnamestart Harju\surnameend} (\bibinfo{year}{2001}):
  \emph{\bibinfo{title}{Mortality in Matrix Semigroups}}.
\newblock {\sl \bibinfo{journal}{AMER. MATH. MONTHLY}} \bibinfo{volume}{2001},
  p. \bibinfo{pages}{653}.

\bibitemdeclare{article}{halavahirvensalo}
\bibitem{halavahirvensalo}
\bibinfo{author}{Vesa \surnamestart Halava\surnameend} \& \bibinfo{author}{Mika
  \surnamestart Hirvensalo\surnameend} (\bibinfo{year}{2007}):
  \emph{\bibinfo{title}{Improved matrix pair undecidability results}}.
\newblock {\sl \bibinfo{journal}{Acta Informatica}}
  \bibinfo{volume}{44}(\bibinfo{number}{3-4}), pp. \bibinfo{pages}{191--205},
  \doi{10.1007/s00236-007-0047-y}.

\bibitemdeclare{article}{halava}
\bibitem{halava}
\bibinfo{author}{Vesa \surnamestart Halava\surnameend}, \bibinfo{author}{Mika
  \surnamestart Hirvensalo\surnameend} \& \bibinfo{author}{Ronald \surnamestart
  de~Wolf\surnameend} (\bibinfo{year}{2001}): \emph{\bibinfo{title}{Marked
  \{PCP\} is decidable}}.
\newblock {\sl \bibinfo{journal}{Theoretical Computer Science}}
  \bibinfo{volume}{255}(\bibinfo{number}{1–2}), pp. \bibinfo{pages}{193 --
  204}, \doi{10.1016/S0304-3975(99)00163-2}.

\bibitemdeclare{article}{imreh}
\bibitem{imreh}
\bibinfo{author}{Bal\'azs \surnamestart Imreh\surnameend} \&
  \bibinfo{author}{Magnus \surnamestart Steinby\surnameend}
  (\bibinfo{year}{1999}): \emph{\bibinfo{title}{Directable Nondeterministic
  Automata.}}
\newblock {\sl \bibinfo{journal}{Acta Cybern.}}
  \bibinfo{volume}{14}(\bibinfo{number}{1}), pp. \bibinfo{pages}{105--115}.
\newblock
  \urlprefix\url{http://dblp.uni-trier.de/db/journals/actaC/actaC14.html#Imreh%
S99}.

\bibitemdeclare{article}{kfouri}
\bibitem{kfouri}
\bibinfo{author}{D.J. \surnamestart Kfoury\surnameend} (\bibinfo{year}{1970}):
  \emph{\bibinfo{title}{Synchronizing Sequences for Probabilistic Automata}}.
\newblock {\sl \bibinfo{journal}{Stud. Appl. Math.}} \bibinfo{volume}{49}, pp.
  \bibinfo{pages}{101--103}.

\bibitemdeclare{article}{martyugin-bound}
\bibitem{martyugin-bound}
\bibinfo{author}{P.~V. \surnamestart Martyugin\surnameend}
  (\bibinfo{year}{2010}): \emph{\bibinfo{title}{{A lower bound for the length
  of the shortest carefully synchronizing words}}}.
\newblock {\sl \bibinfo{journal}{Russian Mathematics}} \bibinfo{volume}{54},
  pp. \bibinfo{pages}{46--54}, \doi{10.3103/S1066369X10010056}.

\bibitemdeclare{incollection}{martyugin-pspace}
\bibitem{martyugin-pspace}
\bibinfo{author}{P.V. \surnamestart Martyugin\surnameend}
  (\bibinfo{year}{2010}): \emph{\bibinfo{title}{Complexity of Problems
  Concerning Carefully Synchronizing Words for PFA and Directing Words for
  NFA}}.
\newblock In \bibinfo{editor}{Farid \surnamestart Ablayev\surnameend} \&
  \bibinfo{editor}{ErnstW. \surnamestart Mayr\surnameend}, editors: {\sl
  \bibinfo{booktitle}{Computer Science – Theory and Applications}}, {\sl
  \bibinfo{series}{Lecture Notes in Computer Science}} \bibinfo{volume}{6072},
  \bibinfo{publisher}{Springer Berlin Heidelberg}, pp.
  \bibinfo{pages}{288--302}, \doi{10.1007/978-3-642-13182-0_27}.

\bibitemdeclare{book}{papadimitriou}
\bibitem{papadimitriou}
\bibinfo{author}{Christos~H. \surnamestart Papadimitriou\surnameend}
  (\bibinfo{year}{1994}): \emph{\bibinfo{title}{Computational complexity.}}
\newblock \bibinfo{publisher}{Addison-Wesley}.

\bibitemdeclare{article}{paterson}
\bibitem{paterson}
\bibinfo{author}{M.~\surnamestart Paterson\surnameend} (\bibinfo{year}{1970}):
  \emph{\bibinfo{title}{Unsolvability in 3$\times$3 matrices}}.
\newblock {\sl \bibinfo{journal}{Studies in Applied Mathematics}}
  \bibinfo{volume}{49}, pp. \bibinfo{pages}{105--107}.

\bibitemdeclare{inproceedings}{potapov}
\bibitem{potapov}
\bibinfo{author}{Igor \surnamestart Potapov\surnameend} (\bibinfo{year}{2004}):
  \emph{\bibinfo{title}{From Post Systems to the Reachability Problems for
  Matrix Semigroups and Multicounter Automata}}.
\newblock In: {\sl \bibinfo{booktitle}{Developments in Language Theory, LNCS
  3340}}, pp. \bibinfo{pages}{345--356}, \doi{10.1007/978-3-540-30550-7_29}.

\bibitemdeclare{incollection}{volkov}
\bibitem{volkov}
\bibinfo{author}{Mikhail~V. \surnamestart Volkov\surnameend}
  (\bibinfo{year}{2008}): \emph{\bibinfo{title}{Language and Automata Theory
  and Applications}}.
\newblock chapter \bibinfo{chapter}{Synchronizing Automata and the \v{C}ern\'y
  Conjecture}, \bibinfo{publisher}{Springer-Verlag}, \bibinfo{address}{Berlin,
  Heidelberg}, pp. \bibinfo{pages}{11--27}, \doi{10.1007/978-3-540-88282-4_4}.

\end{thebibliography}
\end{document}